\newtheorem{theorem}{Theorem}[section]
\newtheorem{corollary}[theorem]{Corollary}
\newtheorem{proposition}[theorem]{Proposition}
\newtheorem{remark}[theorem]{Remark}
\theoremstyle{definition}
\newtheorem{example}[theorem]{Example}
\newcommand{\allones}{\mathbf{1}}
\newcommand{\allzeros}{\mathbf{0}}
\newcommand{\unitvect}{\mathbf{e_1}}
\newcommand{\supt}{^\intercal}
 \def\1{\1}
\newcommand{\numbpara}{\lambda}
\newcommand{\Col}{\mbox{\textsc{Col}}}
\newcommand{\Row}{\mbox{\textsc{Row}}}
\newcommand{\surplus}{\phi}
\title{A Generalization of von Neumann's Reduction from
the Assignment Problem to Zero-Sum Games}
\date{}
\author[1]{Ilan Adler}
\author[2]{Martin Bullinger}
\author[3]{Vijay V. Vazirani}
\affil[1]{ \small Department of Industrial Engineering and Operations Research, University of California Berkeley, USA}
\affil[2]{ \small School of Engineering Mathematics and Technology, University of Bristol, UK}
\affil[3]{ \small Department of Computer Science, University of California Irvine, USA\protect\\ \vspace*{0.1cm} adler@ieor.berkeley.edu, martin.bullinger@bristol.ac.uk, vazirani@ics.uci.edu}
\begin{document}

\maketitle

\begin{abstract}
The equivalence between von Neumann's Minimax Theorem for zero-sum games and the LP Duality Theorem connects cornerstone problems of the two fields of game theory and optimization, respectively, and has been the subject of intense scrutiny for seven decades. Yet, as observed in this paper, the proof of the difficult direction of this equivalence is unsatisfactory: 
It does not assign distinct roles to the two players of the game, as is natural from the definition of a zero-sum game. 

In retrospect, a partial resolution to this predicament was provided in another brilliant paper of \citet{vNeu53a}, which reduced the assignment problem to zero-sum games. However, the underlying LP is highly specialized --- all entries of its objective function vector are strictly positive, the constraint vector is all ones, and the constraint matrix is 0/1.

We generalize von Neumann's result along two directions, each allowing negative entries in certain parts of the LP. Our reductions make explicit the roles of the two players of the reduced game, namely their maximin strategies are to play optimal solutions to the primal and dual LPs. 
Furthermore, unlike previous reductions, the value of the reduced game reveals the value of the given LP. 
Our generalizations encompass several basic economic scenarios. 
\end{abstract}

\section{Introduction}
\label{sec:intro}

The Minimax Theorem for zero-sum games \citep{vNeu28a}, which is regarded as one of \citeauthor{vNeu28a}'s spectacular contributions, not only started off\footnote{Von Neumann is famously quoted as saying ``As far as I can see, there could be no theory of games [\dots] without that theorem [\dots]. I thought there was nothing worth publishing until the Minimax Theorem was proved'' \citep[p.~19]{Casti-Minimax}.} game theory \citep{von1947theory}  
but also had far-reaching consequences in other areas.
In particular, it is equivalent to the Linear Programming Duality Theorem and it led to Yao's Lemma \citep{Yao1977probabilistic}, a key tool for proving a lower bound on the worst-case performance of a randomized algorithm. The equivalence mentioned above is particularly significant since it connects cornerstone problems of the two fields of game theory and optimization. 
Our paper provides a fresh insight into this equivalence.

The two directions of this equivalence are qualitatively different: It is straightforward to reduce a zero-sum game to a primal-dual pair of linear programs (LPs) whose optimal solutions yield maximin strategies of the game \citep{Dant51a}. 
However, the other direction is much harder, i.e., the task of 
solving a linear program --- or deciding that the primal or the dual program is infeasible --- by finding maximin strategies of a corresponding zero-sum game. 
This direction follows from a series of papers published over a course of more than seven decades, starting with the seminal paper by \citet{Dant51a}.\footnote{Interestingly, \citet{Dant82a} describes a memorable meeting of his with von Neumann in which the latter outlined such a correspondence. \citet{Dant82a} also attributes the LP Duality Theorem to von Neumann, again as a consequence of the same meeting, even though the first written proof is due to Tucker and his students Kuhn and Gale.}  However, the current reduction has an unsatisfactory aspect: it does not assign distinct roles to the two players of the reduced game, as is natural from the definition of a zero-sum game, see details below. 

In retrospect, a partial resolution to this quandary was provided in another brilliant paper of \citet{vNeu53a}, which reduced the assignment problem to zero-sum games; he called the latter the {\em hide-and-seek game}. Its drawback is that the underlying LP is highly specialized, see below. In this paper, we  generalize  von Neumann's reduction along two directions, to two fairly large classes of LPs. Our generalizations capture several basic economic scenarios, beyond the assignment problem, and  our reduction has the following properties:
\begin{enumerate}
    \item It uses a scaled version of the constraint matrix as the game matrix, see details in the discussion following Theorem \ref{thm:main}.
    \item The row and column players obtain their maximin strategies by separately finding optimal solutions to the primal and dual LPs, respectively, thereby making explicit the roles of the two players. 
    \item If the LP has an optimal solution then the value of the reduced game is the reciprocal of the value of the LP. 
\end{enumerate}

We now examine in detail the current proofs of the two directions of the equivalence. For the easy direction, interpret the game matrix as the constraint matrix of a maximization LP and solve it optimally to obtain a maximin strategy for the row player \citep[see, e.g.,][Section~2]{vSte23a}. Next optimally solve the dual LP to obtain a maximin strategy for the column player. 
By strong duality, the two LPs have the same objective function value, hence showing that the payoffs of the two players from these strategies must be equal. 
Observe that in general, the two players have distinct roles: their maximin strategies are solutions to different LPs and are vectors of different dimensions.  

For the difficult direction, \citet{Dant51a} reduced the given primal-dual pair of LPs to a symmetric zero-sum game; such a game has symmetric maximin strategies and the value of the game is zero. 
In this game, each player simultaneously mimics both primal and dual LPs, i.e., a strategy of the row player (or the column player) can be separated into two parts, corresponding to solutions of the primal and dual LPs. Even though this result stood for over half a century, it is in fact incomplete as observed by \citet{Adle13a}: the constructed zero-sum game can have solutions of a particular form which cannot be interpreted as optimal primal-dual pairs or certificates of infeasibility of the primal or the dual LP. Later --- by adding an additional row to the zero-sum game --- this reduction was extended to work in full generality, first by \citet{Adle13a} and later, adding a certificate of infeasibility, by \citet{vSte23a} and \citet{BrRe23a}.

Next, we describe the partial resolution by \citet{vNeu53a}. Let $(G, \surplus)$ be an instance of the assignment problem on a complete bipartite graph $G$ with $n$ vertices on each side and a strictly positive weight $\surplus_{ij}$ on edge $(i, j)$. The transformed game matrix $M$ is $2n \times n^2$, where the $2n$ rows correspond to the $2n$ vertices of $G$ and the $n^2$ columns correspond to the $n^2$ edges of $G$. 
Let $\ell$ represent edge $(i, j)$ of $G$. 
Then the $(k, \ell)^{th}$ entry of $M$ is defined as follows:
        \[
            M(k, \ell) \ \ = \ \begin{cases}
                1/\surplus_{ij} & \text{if } k = i \ \text{ or } k = j, \\
                0 & \text{otherwise}.
            \end{cases}
        \]
Observe that each column has exactly two nonzero entries. 

The assignment problem asks for a maximum weight matching in bipartite graph $G$ under edge weights $\surplus_{ij}$. 
In the hide-and-seek game,
the row player picks one vertex and the column player picks one edge, say $e = (i,j)$, of~$G$.
If the vertex is one of the endpoints of the edge, then the column player pays $\frac 1{\surplus_{ij}}$ to the row player, and  otherwise, no payment is made. 
The row player attempts to maximize his payoff and the column player minimizes the amount she has to pay. 
The reason for the name ``hide-and-seek'' should now be clear: the column player hides in an edge and the row player attempts to guess one of its endpoints. 
As usual, mixed strategies will be probability distributions over pure strategies. 

Consider the primal-dual pair of LPs for the assignment problem in $G$: the primal LP finds a maximum weight fractional perfect matching in $G$ and the dual LP finds a minimum generalized vertex cover. The most remarkable step in the construction given above is von Neumann's choice of the payoff matrix. 
Indeed, up to a change in variables caused by scaling, this matrix is exactly the constraint matrix of the LP.
He shows that for this payoff matrix, the optimal solutions to the primal and dual LPs yield maximin strategies for the column and row players, respectively.
But why is the payment the reciprocal of the edge weight? 
Our reduction generalizes the idea of using the scaled constraint matrix as the game matrix.
From this larger perspective, we are able to dispel the mystery behind von Neumann's reduction, see \Cref{sec:productionapplication}.

A drawback of von Neumann's reduction is that his LPs are highly specialized and have almost a  ``combinatorial'' feel to them: His primal LP is $\max c\supt x \ \ \text{s.t. }  A x \le \allones, \ x \geq \allzeros$, with all entries in $c$ being strictly positive and the constraint matrix, $A$, being the incidence matrix of the underlying complete bipartite graph, i.e., all components of this matrix are zeros or ones. 
That still leaves a nagging doubt: can a separation of the roles of the two players be achieved if the LPs are more general, e.g., 
for a more general constraint vector $b$ or for data with negative entries. 
In this paper, we generalize von Neumann's hide-and-seek game along two directions: 

\begin{enumerate}
	\item  If $A$ is arbitrary, but $b$ and $c$ are strictly positive. 
	\item  If $b$ and $c$ are arbitrary, but $A$ is nonnegative. 
\end{enumerate}

A wide range of applications are covered by these two extensions. 
The allocation of scarce resources, including classic economic scenarios such as the diet problem (via its dual LP) or farm planning, can be modelled with LPs which typically have all components of $A$, $b$, and $c$ positive \citep[][Chapter~11]{Chva83a}, as required by the first extension.
Additional applications are covered by the second extension, e.g., the transportation problem which is equivalent to $b$-matching, a direct generalization of the assignment problem \citep[][Chapter~21]{Schr03a}.
In fact, the setting of LPs with positive $b$ and~$c$ corresponds to production planning scenarios and gives rise to a natural game-theoretic interpretation extending \citeauthor{vNeu53a}'s hide-and-seek game, as we discuss in \Cref{sec:productionapplication}.
We call the resulting game the \emph{production planning game}.

We leave the open problem of giving a complete generalization. 
We believe this will require a substantially new approach, since we seem to have fully extended von Neumann's idea of rescaling. 
Our results include: 

\begin{enumerate}
    \item We cover larger classes of linear programs, beyond the assignment problem, that capture many important applications.
    \item Our generalization even covers linear programs whose primal is unbounded (and hence dual is infeasible). 
    \item Our reduction carves out the exact reason why the hide-and-seek game uses the reciprocals of the assignment values for the payments.
\end{enumerate}

Finally, since the reduced game always has a well-defined value, achieved by maximin strategies, it offers a new perspective on LP pairs, one of which is unbounded. 
The game-theoretic world allows us to distinguish such LPs; this can be interpreted as a degree of (in)feasibility, see \Cref{sec:posvectors}.

\paragraph{Related Work}
Some previous work has proposed other generalizations of \citeauthor{vNeu53a}'s hide-and-seek game.
However, in contrast to our paper, none of these provides a reduction between linear programs and zero-sum games.

First, \citet{MaTa81a} generalized the hide-and-seek game to higher dimensions.
Recall that the original interpretation of \citeauthor{vNeu53a}'s hide-and-seek game is that the first player hides in one cell of a matrix, and the other player tries to find the first player by checking a row or a column of the matrix.
\citet{MaTa81a} study a game in which the first player hides in a higher-dimensional tensor.

Second, \citet{GaJa24a} study a different generalization of \citeauthor{vNeu53a}'s result.
In contrast to \citet{MaTa81a}, their reduction is from a version of the assignment problem with a more general, \emph{nonlinear} payoff structure.
They reduce to a two-player game generalizing the hide-and-seek game that is not zero-sum anymore.

\section{Preliminaries}
\label{sec:prelims}

We start with some conventions.
Whenever we have an inequality between vectors, we assume that it holds componentwise.
Moreover, for a vector $v$ and a matrix $M$, we denote by $v\supt$ and $M\supt$ its transposed vector and matrix, respectively.
Finally, we denote by~$\allones$ and~$\allzeros$ the column vectors or matrices consisting of $1$- and $0$-entries only; their dimension will be clear from the context.

We now introduce linear programs and zero-sum games, the main objects of our study.
For text book introductions to these topics, we recommend the books by \citet{Schr98a} and \citet{NRTV07a}.

We consider linear programs of the following form.
Assume we are given $A\in \mathbb R^{m\times n}$, $b\in \mathbb R^m$, and $c\in \mathbb R^n$. 
We denote their components by $a_{ij}$, $b_i$, and $c_j$, respectively.
These objects specify the LP $(P)$  together with its dual $(D)$.
\begin{align*}
    (P)&
    & (D) & \\
    \text{max} \quad c\supt x &
    &\text{min} \quad b\supt y & \\
    \text{s.t. } \quad A x &\le b
    &\text{s.t. } \quad A\supt y&\ge c \\
    x&\ge \allzeros
    & y &\geq \allzeros\\
\end{align*}

We write $(P,D)$ to refer to the pair of LPs $(P)$ and $(D)$.
A vector $x\in \mathbb R^n$ is called \emph{(primal) feasible} if it satisfies $Ax\le b$ and $x\ge \allzeros$, i.e., if it satisfies the constraints of $(P)$.
Similarly, a vector $y\in \mathbb R^m$ is called \emph{(dual) feasible} if it satisfies $A\supt y\ge c$ and $y\ge \allzeros$, i.e., if it satisfies the constraints of $(D)$.
The \emph{weak duality} theorem states that if $x$ and $y$ are primal feasible and dual feasible, respectively, then $c\supt x \le b\supt y$.
This implies that whenever we find feasible $x$ and $y$ such that $c\supt x = b\supt y$, then $x$ is a feasible vector maximizing $c\supt x$ and $y$ is a feasible vector minimizing $b\supt y$.
In this case, we say that $(x,y)$ an \emph{optimal primal-dual pair}.
The \emph{strong duality} theorem states that whenever $(P)$ and $(D)$ admit feasible points, then they admit an optimal primal-dual pair, and if so, $c\supt x = b\supt y$; this is called the {\em value of the LP}. 
The following characterization of possible outcomes of a pair of primal and dual LPs $(P,D)$ is well-known.

\begin{proposition}
\label{prop:LP_result}
    For any 
    pair of linear programs $(P,D)$, exactly one of the following outcomes is true:
    \begin{enumerate}
    \item \label{it:opt}
       There exist $x^*,y^*$  such that  
      $$Ax^* \leq b, \;x^*\geq \allzeros;\; A\supt y^* \geq c, \;y^*\geq \allzeros,\; c\supt x^*=b\supt y^*.$$
      In this case $x^*$ and $y^*$ are  optimal solutions to $(P)$ and $(D)$ respectively. 
       
      \item \label{it:inf}
         There exists $z^*$  such that   
      $$A\supt z^* \geq \allzeros,\;z^*\geq \allzeros,\;  b\supt z^*<0.$$
      In this case  $(P)$ is infeasible. We call $z^*$ an infeasibility certificate for $(P)$.   
    \item \label{it:unb}
    There exists $w^*$ such that 
         $$ A w^* \leq \allzeros,\; w^*\geq \allzeros,\; c\supt w^*>0.$$
      In this case  $(P)$ is unbounded if feasible. We call $w^*$ an unboundedness certificate for $(P)$.   
    \end{enumerate}
\end{proposition}

Given $A\in \mathbb R^{m\times n}$, $c\in \mathbb R^n$, and $b\in \mathbb R^m$, the {\it problem of solving the linear program $(P)$} is to provide one of the three outputs stated above.

\begin{remark}\label{rem:LPremarks}
We make a few important remarks.

$(i)$ In order to certify $(P)$ to be unbounded, we need to show, in addition to the unboundedness certificate, a feasible solution to $(P)$ (i.e., a vector~$x$ such that $Ax\le b$, $x\ge \allzeros)$. 

{$(ii)$} If $z^*$ is an infeasibility certificate for $(P)$, then $z^*$ is an unboundedness certificate for $(D)$.
Similarly, if $w^*$ is an unboundedness certificate for $(P)$, then $w^*$ is an infeasibility certificate for $(D)$.

{$(iii)$} Cases (2) and (3) in \Cref{prop:LP_result} establish that the LPs have no optimal solutions by certifying that either $(P)$ or $(D)$ are infeasible. 

\end{remark}

Our second important concept is zero-sum games.
A \emph{zero-sum game} is specified by a matrix $M = (m_{ij})\in \mathbb R^{m\times n}$.
There are two players, a row player and a column player; for ease of description, we will assume that the first is male and the second is female. 
The row player's \emph{action} is to select a row $i$ with $1\le i\le m$ and the column player's action is to select a column~$j$ with $1\le j\le n$.
Selecting row $i$ and column $j$ leads to a payment of the associated matrix entry, $m_{ij}$, from the column player to the row player.
Hence, whereas the row player's goal is to maximize the value of the matrix entry chosen, the column player's goal is to minimize it.  

Players are allowed to randomize their actions.
A \emph{mixed strategy} is a probability distribution over actions.
The row and column players select mixed strategies over rows and columns, respectively.
These can be specified by vectors $p\in \mathbb R^m$ and $q\in \mathbb R^n$ that are nonnegative and satisfy $p\supt \allones = q\supt \allones = 1$.
The entries of $p$ and $q$ are the probabilities with which the respective rows and columns are played.
If the row player plays $p$ and the column player plays $q$, then the row player receives the \emph{expected payment} of $p\supt Mq$ from the column player.

As a solution concept for zero-sum games, we consider maximin strategies.
These are strategies that give a best possible guarantee to a player, even if the opponent plays the worst possible counter-strategy. 
More precisely, a \emph{maximin strategy} for the row player (who tries to maximize his payoff) is a strategy $p^*$ that satisfies
$$p^* \in \arg\max_p\min_q p\supt M q\text.$$

Similarly, a \emph{minimax strategy} for the column player is a strategy $q^*$ that satisfies
$$q^*\in \arg\min_q\max_p p\supt M q$$

If we replace the matrix $M$ by its negative, we see that this is the same as a maximin strategy in the game where the column player tries to maximize the negative payoff she receives.
Therefore, we simplify terminology and say that we compute maximin strategies for both players when we actually mean computing a maximin strategy for the row player and a minimax strategy for the column player.
We call $(p^*,q^*)$ a \emph{pair of maximin strategies}.

The unique payments determined by these strategies, i.e., $u^* = \max_p\min_q p\supt M q$ and $v^* = \min_q\max_p p\supt M q$ are called the \emph{maximin value} and \emph{minimax value} of the game, respectively.
The famous Minimax Theorem of \citet{vNeu28a} states that
$$u^* = \max_p\min_q p\supt M q= \min_q\max_p p\supt M q=v^*\text,$$
 i.e., the maximin value of the game coincides with its minimax value.
Therefore, this quantity is 
called the \emph{value of the game}.
The goal of \emph{solving a zero-sum game} is to compute a pair of maximin strategies.
Note that such strategies are mutual best responses, and therefore constitute a Nash equilibrium.

It is well known that maximin strategies as well as the value of the game can be computed by the following LPs; this is the easy direction of the correspondence between LPs and zero-sum games \citep[see, e.g.,][Equations (7) and (8)]{vSte23a}:
\begin{align*}
                    (&\Row)                &~               (&\Col)       \\
    \text{max} \quad &u  &\text{min} \quad &v           \\
    \text{s.t. }     &M\supt p \geq u \allones   &\text{s.t. }     &Mq \leq v \allones \\
    &\allones\supt p = 1, \; p \geq \allzeros      &   &\allones\supt q = 1, \; q\geq \allzeros
\end{align*}
Note that $(\Row, \Col)$ are a primal-dual LP pair.
The optimal value of the linear program $(\Row)$ is the maximin value corresponding to the row player.
An optimal solution $(u^*,p^*)$ of $(\Row)$ consists of the maximin value together with a maximin strategy.
In fact, the constraints of $(\Row)$ imply that $p$ is a probability distribution and that $p\supt M\ge \allones\supt u$.
Therefore, ${p^*}\supt M q\ge u^*$ for all mixed strategies $q$ of the column player.
Similarly, $(\Col)$ determines the minimax value and a minimax strategy of the column player.

\section{Extending Von Neumann's Reduction}
\label{extending_vN_reduction}

In this section, we provide our extensions of \citeauthor{vNeu53a}'s reduction.  
First, we consider the case where $b > \allzeros$ and $c > \allzeros$, i.e., both vectors are strictly positive.  
Subsequently, we will show how this enables us to tackle the case that the  constraint matrix is nonnegative, i.e., $A  \ge \allzeros$.

\subsection{Positive Objective and Constraint Vectors}\label{sec:posvectors}
The key technical step of our reduction is a rescaling of the LPs $(P)$ and $(D)$; this is where we require 
$b > \allzeros$ and $c > \allzeros$. 
Consider the diagonal matrices $B \in \mathbb R^{m\times m}$ and $C \in \mathbb R^{n\times n}$ defined by

$$ B = 
\begin{pmatrix}
\medskip
\frac 1{b_1} &  &  \\
 & \rotatebox{135}{\dots} &  \\
 & &\frac 1{b_m} \\
\end{pmatrix}
\quad
\text{ and }
\quad
C = 
\begin{pmatrix}
\medskip
\frac 1{c_1} &  &  \\
 & \rotatebox{135}{\dots} &  \\
 & &\frac 1{c_n} \\
\end{pmatrix}\text.$$

We define the matrix $M = BAC \in \mathbb R^{m\times n}$ (its name already suggests that this will be the payoff matrix of our reduced zero-sum game) and a related pair of LPs as follows.
\begin{align*}
    (P')&
    & (D') & \\
    \text{max} \quad \allones\supt \xi&
    &\text{min} \quad \allones\supt \eta& \\
    \text{s.t. } \quad M \xi &\le \allones
    &\text{s.t. } \quad M\supt \eta&\ge \allones  \\
    \xi&\ge \allzeros
    & \eta &\geq \allzeros\\
\end{align*}

It is easy to see that this change of variables by scaling merely leads to a rescaling in the solutions of the original pair of LPs:
    \begin{enumerate}[label=(\roman*)]
        \item $\xi$ (resp., $\eta$) is feasible for $(P')$ (resp., $(D')$) if and only if $C\xi$ (resp., $B\eta$) is feasible for $(P)$ (resp., $(D)$)
    \item
    $(\xi^*, \eta^*)$ is an optimal primal-dual pair for $(P', D')$ if and only if $(C\xi^*,B\eta^*)$ is an optimal primal-dual pair for $(P,D)$.
    \item
    $\bar\xi$ is an unboundedness certificate  for $(P')$ if and only if $C\bar\xi$ is an unboundedness certificate  for $(P)$.
    \end{enumerate}
We remark that $(P)$ and $(P')$ are never infeasible when $b > \allzeros$, since in this case, $x = \allzeros$ is a feasible solution.

Now, let $(G)$ be the zero-sum game with payoff matrix $M$ where the row-player wants to maximize his payoff and the column player wants to minimize her payoff. 
Recall that the maximin and minimax values of this game correspond to the optimal values of the LPs $(\Row)$ and $(\Col)$ defined in \Cref{sec:prelims}.
We have the following theorem.
Note that while it establishes a correspondence between zero-sum games and LP pairs of the type $(P',D')$, it applies to LP pairs of the type $(P,D)$ whenever $b > \allzeros$ and $c > \allzeros$, i.e., the rescaling can be performed.

\begin{theorem}\label{thm:main}
    \begin{enumerate}
    \item
    Let $(p^*,q^*)$ be a pair of maximin strategies of the zero-sum game $(G)$ and $v^* (=u^*)$ its value.
       \begin{enumerate}
        \item 
        If $v^*>0$, then 
        $\left(\frac 1{v^*}q^*,\frac 1{v^*}p^*\right)$ is an optimal primal-dual pair for $(P',D')$.
        \item 
         If $v^* \leq 0$, then 
        $q^*$ is an unboundedness certificate for  $(P')$.
         \end{enumerate}
         \item Let $v^*$ be the value of the zero-sum game $(G)$.
         \begin{enumerate}
             \item 
         If $(\xi^*,\eta^*)$ is an optimal primal-dual pair for $(P')$ and $(D')$, then ${\allones\supt \xi^*}\left(={\allones\supt \eta^*}\right) > 0$,
         $\left(\frac{1}{\allones\supt \eta^*}\eta^*, \frac{1}{\allones\supt \xi^*}\xi^*\right)$ 
         is a pair of maximin strategies of the zero-sum game $(G)$, and $v^*=\frac{1}{\allones\supt \xi^*}$. 
            \item 
            If $(P')$ is unbounded, then $v^* \le 0$. 
         \end{enumerate}
    \end{enumerate}
\end{theorem}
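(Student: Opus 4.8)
### Proof Proposal

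The plan is to connect the game $(G)$ to the rescaled linear programs $(P')$ and $(D')$ via the standard ``easy direction'' LPs $(\Row)$ and $(\Col)$, and then transport conclusions back to $(P),(D)$ using \Cref{prop:LPscaling} when needed. The central observation driving everything is that $(\Row)$ and $(\Col)$ differ from $(D')$ and $(P')$ only by the normalization constraints $\allones\supt p = 1$, $\allones\supt q = 1$ and by the scalar variables $u,v$; so a feasible point of one can be turned into a feasible point of the other by scaling, with the value of the game and the value $\allones\supt\xi^*$ of the LP being reciprocals of each other. I would prove the two parts in the order they are stated, since Part~2 is essentially the converse bookkeeping to Part~1.

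For Part~1, start from a pair of maximin strategies $(p^*,q^*)$ with game value $v^* = u^*$. By the remarks following the statement of $(\Row),(\Col)$ in the preliminaries, $(u^*,p^*)$ is optimal for $(\Row)$ and $(v^*,q^*)$ is optimal for $(\Col)$, so $M\supt p^* \ge \allones u^*$, $Mq^* \le \allones v^*$, and $p^*,q^* \ge \allzeros$ are probability vectors. \textbf{Case $v^* > 0$:} set $\xi := \frac{1}{v^*}q^*$ and $\eta := \frac{1}{v^*}p^*$. Then $M\xi \le \allones$, $M\supt\eta \ge \allones$, and both are nonnegative, so $\xi$ is feasible for $(P')$ and $\eta$ for $(D')$. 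Moreover $\allones\supt\xi = \frac{1}{v^*}\allones\supt q^* = \frac{1}{v^*}$ and likewise $\allones\supt\eta = \frac{1}{v^*}$, so $\allones\supt\xi = \allones\supt\eta$; by weak duality for $(P'),(D')$ this forces optimality of the pair, giving (a). \textbf{Case $v^* \le 0$:} I want to show $q^*$ witnesses unboundedness of $(P')$, i.e.\ $Mq^* \le \allzeros$, $q^* \ge \allzeros$, $\allones\supt q^* > 0$. The last is immediate since $q^*$ is a probability vector. For the first, note $Mq^* \le \allones v^* \le \allzeros$ because $v^* \le 0$. This gives (b). (One should also double-check $v^*$ cannot be, say, exactly $0$ in a degenerate way that breaks the certificate — but $Mq^*\le \allones\cdot 0 = \allzeros$ still holds, so the argument is uniform in the sign.)

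For Part~2(a), suppose $(\xi^*,\eta^*)$ is an optimal primal-dual pair for $(P'),(D')$. First argue $\allones\supt\xi^* > 0$: since $M\xi^* \le \allones$ and $\xi^* \ge \allzeros$, if $\allones\supt\xi^* = 0$ then $\xi^* = \allzeros$, but then by strong duality $\allones\supt\eta^* = 0$, i.e.\ $\eta^* = \allzeros$, contradicting $M\supt\eta^* \ge \allones > \allzeros$. So $\allones\supt\xi^* > 0$, and by strong duality $\allones\supt\xi^* = \allones\supt\eta^* =: s > 0$. Now set $v^* := 1/s$, $p := v^*\eta^*$, $q := v^*\xi^*$. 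Then $p,q \ge \allzeros$, $\allones\supt p = \allones\supt q = 1$, and from $M\xi^* \le \allones$, $M\supt\eta^* \ge \allones$ we get $Mq \le \allones v^*$ and $M\supt p \ge \allones v^*$. Thus $(v^*,q)$ is feasible for $(\Col)$ and $(v^*,p)$ is feasible for $(\Row)$ with equal objective value $v^*$; weak duality between $(\Row),(\Col)$ makes both optimal, so $v^*$ is the value of the game and $(p,q)$ is a pair of maximin strategies, as claimed. For Part~2(b): if $w^*$ is an unboundedness certificate for $(P')$, then by \Cref{rem:infeasible} / the exclusivity in \Cref{prop:LP_result}, $(P'),(D')$ have no optimal primal-dual pair, in particular $(D')$ is infeasible, hence $(\Row)$ has optimal value $\le 0$ — more directly, scaling $w^*$ to a probability vector $q := w^*/\allones\supt w^*$ gives $Mq \le \allzeros$, so $q$ is feasible for $(\Col)$ with objective value $v = 0$, whence the value of the game $v^* = \min_q \max_p p\supt Mq \le 0$.

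The main obstacle I anticipate is the boundary behavior around $v^* = 0$ in Part~1(b) and the argument that $\allones\supt\xi^* \ne 0$ in Part~2(a): these are exactly the places where a naive ``divide by $v^*$'' or ``divide by $\allones\supt\xi^*$'' argument breaks, and one must instead reason directly from the inequalities $Mq^* \le \allones v^*$ and $M\supt\eta^* \ge \allones$. Everything else is routine scaling combined with weak/strong duality and the preliminaries' correspondence between game values and the LPs $(\Row),(\Col)$; note that this part of the theorem lives entirely at the level of $(P'),(D')$ and $(G)$, so \Cref{prop:LPscaling} is only needed if one additionally wants to phrase the conclusions in terms of the original $(P),(D)$.
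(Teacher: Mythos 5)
Your proposal is correct and follows essentially the same route as the paper's proof: scale between $(\Row),(\Col)$ and $(P'),(D')$ by the factor $v^*$ (resp.\ $\allones\supt\xi^*$), handle $v^*\le 0$ by reading off the unboundedness certificate directly from $Mq^*\le\allones v^*\le\allzeros$, and scale an unboundedness certificate to a column strategy guaranteeing value $0$. The only cosmetic differences are in Part~2(a), where you justify $\allones\supt\xi^*>0$ by a duality contradiction (the paper simply asserts the optimal value of $(P')$ is positive) and conclude optimality via weak duality between $(\Row)$ and $(\Col)$ rather than verifying the mutual-best-response inequalities directly.
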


\begin{proof}
We prove the statements one by one.
    \begin{enumerate}
  
   \item[{1-(a).}]
       The feasibility of $\left(\frac 1{v^*}q^*,\frac 1{v^*}p^*\right)$
       for $(P')$ and $(D')$ is readily verified by the feasibility of 
       $(p^*,q^*)$ for $(\Col)$ and $(\Row)$.
       In addition, 
       $\allones\supt p^*= \allones\supt q^*=1$ implies 
       $\allones\supt \frac{1}{v^*}q^*= \allones\supt \frac{1}{v^*}p^*$ which completes the proof.
     \item[{1-(b).}] Suppose that $v^*\le 0$. 
         Then, $Mq^*\leq v^* \allones \leq \allzeros$, $\allones\supt q^* =1>0$, and $q^* \geq \mathbf{0} $. 
     Hence, $q^*$ is an unboundedness certificate for $(P')$.
    \item[{2-(a).}] 
    First, note that the optimal objective value of $(P')$ is always strictly larger than $0$, and, therefore, $\allones\supt \xi^* > 0$.
	Define $\hat v := \frac{1}{\allones\supt \xi^*} > 0$.
    We have that 
    $\allones\supt (\hat v\eta^*)=\allones\supt (\hat v\xi^*)=1$, $\hat v\eta^*\ge \allzeros$, and $\hat v\xi^*\ge \allzeros$, and, therefore, $\hat v\eta^*$ and $\hat v\xi^*$ are valid strategies.
    
	Next, we show that $\hat v\eta^*$ and $\hat v\xi^*$ are mutual best responses.
    Consider any strategy $q$ of the column player.
    By dual feasibility, it holds that 
	\begin{equation}\label{eq:LBvalue}
		(\hat v\eta^*)\supt M q \ge (\hat v\allones)\supt q = \hat v\text.
	\end{equation}
	
    There, $(\hat v\allones)\supt q = \hat v$ holds because $q$ is a probability distribution.
    Moreover, by primal feasibility, 
	\begin{equation}\label{eq:UBvalue}
		(\hat v\eta^*)\supt M (\hat v \xi^*) \le (\hat v\eta^*)\supt  \hat v \allones = \hat v \hat v {\eta^*}\supt  \allones = \hat v \hat v\frac 1{\hat v} = \hat v\text.
	\end{equation}
    Hence, $\hat v\eta^*$ is a best response for the row player.
    The computation for the column player is analogous.
    Together, $(\hat v\eta^*,\; \hat v\xi^*)$ is a Nash equilibrium and therefore a pair of maximin strategies for $(G)$.
	
	Finally, combining \Cref{eq:LBvalue} for $q = \hat v \xi^*$ and \Cref{eq:UBvalue}, we know that $(\hat v\eta^*)\supt M (\hat v \xi^*) = 
\hat v$.
	Hence, $v^* = \hat v$ is the value of the game $(G)$.
    \item[{2-(b).}]
    Assume that $w^*$ is an unboundedness certificate for $(P')$, i.e., $Mw^*\le \allzeros$, $w^*\ge\allzeros$, and $\allones\supt w^* > 0$.
    Let $\alpha = \frac{1}{\allones\supt w^*} >0$ and consider $q = \alpha w^*$ and $v = 0$.
    We claim that $(q,v)$ is feasible for $(\Col)$.
    
    Indeed, since $w^*\ge \allzeros$, it holds that $\alpha w^*\ge \allzeros$.
    In addition, $\allones\supt q = \alpha \allones\supt w^* = 1$.
    Moreover, $Mq = \alpha M w^* \le \alpha \allzeros = v \allones$.

    Hence, $(q,v)$ is feasible for $(\Col)$ and therefore $v^*\le 0$.
    \qedhere
    \end{enumerate}
\end{proof}

We conclude this subsection with remarks about the interpretation of our reduction. 

\paragraph{Linear Programs in Game Form}
An appealing aspect of our reduction is that it uses a scaled version of the constraint matrix, $A$, as the game matrix. The scaling is done as follows: the columns of $A$ are scaled by the reciprocals of the coefficients in the objective function, $c$, and the rows of $A$ are scaled by the reciprocals of the coefficients in the right hand side, $b$.
In addition to being natural, this approach strengthens the connection between the original linear program and the reduced game beyond what was achieved by Dantzig and his successors.

Once the rescaling is applied, we obtain a correspondence by directly using the constraint matrix as the game matrix of a zero-sum game.
In other words, LPs of the form $(P',D')$ can be interpreted as \emph{linear programs in game form}.
Hence, von Neumann's hide-and-seek game results from the realization that the assignment problem can be written in game form.
And more generally, we observe that any linear program pair of the form $(P,D)$ with $b > \allzeros$ and $c > \allzeros$ can be transformed to game form, as well.

Unlike in previous reductions, the column and row players in our reduction have different roles corresponding to the primal and dual linear programs, respectively.
Each fixed strategy (including a nonoptimal one) gives a guarantee to a player, i.e., a minimum reward for the row player against adversarial play of the column player or a maximum payment by the column player against an adversarial row player. 
These guarantees are obtained by solving $(\Row)$ and $(\Col)$ after fixing strategies $p$ and $q$, respectively.
When players play optimal strategies, they get their optimal guarantee, which is exactly the maximin or minimax value.

In fact, one can view playing the game of an LP in game form as \emph{playing the linear program}.
Any strategy guaranteeing the column player nonpositive payoff (not only the maximin strategy) gives rise to an unboundedness certificate as in Part~{\it 1-(b)} of the proof of \Cref{thm:main}.
Moreover, strategies guaranteeing the row player a positive payoff correspond to dual feasible points.
Once again, this works even for nonoptimal strategies, but can then lead to feasible points with a suboptimal objective value.
Hence, players compete by proposing (good) feasible points of their respective LPs, whenever this is possible. 
This can be viewed as playing an LP.
The reduced zero-sum game yields a reasonable outcome of playing, even for suboptimal or infeasible strategies.
In the special case, where both players play maximin strategies, we obtain an optimal primal-dual pair.

\paragraph{Value of Game as Degree of Infeasibility}

The games by \citeauthor{Dant51a} and \citeauthor{Adle13a} are symmetric and have value zero, i.e., the value of the game has no information.\footnote{While we regard the nonsymmetry of our reduced games as beneficial for interpreting the game, we acknowledge that reductions producing symmetric games have other, computational advantages.
For instance, the symmetrization techniques by \citet{GKT50a} imply that algorithms for solving symmetric games can be applied to nonsymmetric games without significant further computational effort.} 
Moreover, the value of the reduced games by \citeauthor{vSte23a} as well as \citeauthor{BrRe23a} are zero if and only if there exists an optimal solution to the original LP. 
In contrast, the value of our reduced game has valuable information: first of all, it allows us to decide whether the LP admits an optimal solution and, if so, it is the reciprocal of its optimal value.

    Additionally, the primal-dual pair $(\Row,\Col)$ always has primal and dual feasible solutions, whereas $(D')$ might be infeasible.
    Recall from \Cref{rem:LPremarks} that an infeasibility certificate for $(D')$ is equivalent to an unboundedness certificate for $(P')$, i.e., we ask whether there exists $w^*$ with $Mw^*\le \allzeros$, $w^*\ge \allzeros$, and $\allones\supt w^* > 0$.
    By scaling $w^*$, the last condition can equivalently be written as $\allones\supt w^* = 1$.
    The LP $(\Col)$ only differs from these constraints by adding the minimum necessary slack to the constraints governed by $v$.
    Hence, the value of the reduced zero-sum game can be interpreted as a \emph{degree of feasibility}, i.e., the minimum slack needed to guarantee feasibility of $(D')$.
    Games with a nonpositive value indicate infeasibility of $(D')$, and their exact value can then be seen as measuring a \emph{degree of infeasibility}.

\subsection{Nonnegative Constraint Matrix}
Finally, we show that the developed theory in \Cref{sec:posvectors} also allows us to deal with linear program pairs $(P,D)$ when $A \ge \allzeros$ but $b$ and $c$ are not constrained to be positive.  

The intuitive idea is to preprocess the LP to satisfy the conditions that $b$ and~$c$ are positive:
If $b_i < 0$, then the nonnegativity of $A$ immediately implies infeasibility.
Otherwise, we want to delete all columns and rows that correspond to nonpositive entries of $b$ and $c$.
If $b_i = 0$, then $x_j = 0$ whenever there exists $j\in [n]$ with $a_{ij} > 0$.
Hence, we can delete the $i$th row together with all such columns, and later on insert $x_j = 0$ when reconstructing solutions.
Similarly, $c_j\le 0$ requires $x_j = 0$ and we can omit such columns to transition to a smaller LP.
We thus reach a reduced LP with $b > \allzeros$ and $c > \allzeros$, for which we can apply the correspondence of \Cref{sec:posvectors}.

We now define sets $I_0$, $J_0$, and $J_{-}$ of indices of rows and columns to be deleted according to the above discussion.
\begin{align}
    I_0 &:= \{ i\in [m] \; |\; b_i =0 \},\notag\\
    J_0 &:= \{ j\in [n] \; |\; a_{ij}>0 \text{ for some } i \in I_0\}, \textnormal{ and }\label{eq:LPremove}\\
    J_{-} &:= \{ j\in [n] \; |\; c_j \leq 0 \}\notag
\end{align}

Let $\hat{A}$, $\hat{b}$, and $\hat{c}$ be obtained by removing rows indexed by $I_0$ and columns indexed by $J_0\cup J_{-}$ from $A$, indices in $I_0$ from $b$, and indices in $J_0\cup J_{-}$ from $c$, respectively.
This gives rise to the modified LP pair $(\hat P, \hat D)$ defined as follows.

\begin{align*}
    (\hat P)&
    & (\hat D) & \\
    \text{max} \quad \hat c\supt x &
    &\text{min} \quad \hat b\supt y & \\
    \text{s.t. } \quad \hat A x &\le \hat b
    &\text{s.t. } \quad \hat A\supt y&\ge \hat c \\
    x&\ge \allzeros
    & y &\geq \allzeros\\
\end{align*}

We solve the original LP, for which we assumed that $A \ge \allzeros$, by applying \Cref{alg:Anonneg} following the strategy discussed above.
First, we check whether $b\ge \allzeros$ and if not, we find an infeasibility certificate for $(P)$.
Otherwise, we 
consider the restricted LP pair $(\hat P, \hat D)$.
However, we have to be careful as the smaller LP might conceal some easy cases that can be detected before deleting rows and columns.
First, we might find an unboundedness certificate (\cref{ln:trivUB}).
Moreover, it can happen that $(\hat P, \hat D)$ is an empty LP as we might remove all rows or columns.
In this case, we find an optimal primal-dual pair (\cref{ln:trivOPT}).

These three steps can be seen as \emph{preprocessing} before we apply our actual correspondence.
Now, we can consider the restricted LP pair $(\hat P, \hat D)$ in \cref{ln:modLP} of the algorithm.
For this LP, we have achieved that $\hat b > \allzeros$ and $\hat c > \allzeros$.
Indeed, we know after preprocessing that $b\ge \allzeros$.
Moreover, removing indices in $I_0$ guarantees that $b > \allzeros$ and removing indices in $J_{-}$ guarantees that $c > \allzeros$.
Hence, we can solve $(\hat P, \hat D)$ by first rescaling the rows and columns to obtain a constraint matrix in game form, and then applying our correspondence with a zero-sum game using \Cref{thm:main}.
As a last step, we insert appropriate values for the omitted indices to obtain certificates for $(P,D)$.
This part can be seen as a \emph{postprocessing} of the solution.
Clearly, \Cref{alg:Anonneg} runs in strongly polynomial time.
We collect the key insights for its correctness in the following theorem.

\begin{algorithm}[tb!]
  \caption{Solving linear programs with $A \ge \allzeros$\label{alg:Anonneg}}
	\KwIn{Linear program pair specified by $A \ge \allzeros$, $b\in \mathbb R^m$, and $c\in \mathbb R^n$}
	\KwOut{Certificate according to \Cref{prop:LP_result}}

\If {there exists $i \in [m]$ such that $b_i < 0$}
{Define $z^*$ by
  $z^*_k
=
	\begin{cases}
		1  & k=i \\
		0  & k \neq i 
	\end{cases}
$.  \\

\KwRet {Infeasibility certificate $z^*$}\label{ln:INF}}
Define $I_0$, $J_0$, and $J_{-}$ as in \Cref{eq:LPremove}. \\
\If {there exists $j \in [n]\setminus J_{-}$ such that $a_{ij} = 0$ for all $i\in [m]$\label{ln:UBcond}}
{Define $w^*$ by
  $w^*_k
=
	\begin{cases}
		1  & k=j \\
		0  & k \neq j 
	\end{cases}
$.  \\

\KwRet {Unboundedness certificate $w^*$}\label{ln:trivUB}}
Define $\numbpara := \max\left\{\left.\frac{c_j}{a_{ij}}\; \right|\; j\in J_0, i\in I_0 \text{ with } a_{ij} > 0 \right\}$ if $J_0 \neq \emptyset$ and $\numbpara = 0$ if $J_0 = \emptyset$.\\
\If{$I_0 = [m]$ or $J_0\cup J_{-} = [n]$}
{
Define $(x^*,y^*)$ by $x^* = \allzeros$ and $y^*_i := \begin{cases}
    \numbpara & i\in I_0\\
    0 & i\notin I_0
\end{cases}$.\\
\KwRet{Optimal primal-dual pair $(x^*,y^*)$}\label{ln:trivOPT}}
Solve linear program pair $(\hat P,\hat D)$ based on $\hat{A}$, $\hat{b}$, and $\hat{c}$.\label{ln:modLP}\\
\If{$(\hat x,\hat y)$ is optimal primal-dual pair}
{
Define $(x^*,y^*)$ by $x^*_j := \begin{cases}
    0 & j\in J_0\cup J_{-}\\
    \hat x_j & j\notin J_0\cup J_{-}
\end{cases}$ and $y^*_i := \begin{cases}
    \numbpara & i\in I_0\\
    \hat y_i & i\notin I_0
\end{cases}$.\\
\KwRet{Optimal primal-dual pair $(x^*,y^*)$\label{ln:OPT}}
}
\end{algorithm}

\clearpage

\begin{theorem}\label{thm:A_nonnegative}
Assume that $A \geq \allzeros$.
    \begin{enumerate}
    \item
    $(P)$ is feasible if and only if $b \geq \allzeros$.
    If there exists $i \in [m]$ such that $b_i < 0$, then \Cref{alg:Anonneg} returns an infeasibility certificate for $(P)$ in \cref{ln:INF}.
    \end{enumerate}
    Now suppose $b \geq \allzeros$.
    \begin{enumerate}
    \setcounter{enumi}{1}
\item $(P)$ is unbounded if and only if there exists $j \in [n]\setminus J_{-}$ such that $a_{ij} = 0$ for all $i\in [m]$. In this case, \Cref{alg:Anonneg} returns an unboundedness certificate for $(P)$  in \cref{ln:trivUB}.
\item Otherwise $(P,D)$ admits an optimal primal-dual solution. There are two cases for \Cref{alg:Anonneg}:
    \begin{enumerate}
        \item If $I_0 = [m]$ or $J_0\cup J_{-} = [n]$, 
        then \Cref{alg:Anonneg} returns an optimal primal-dual pair for $(P,D)$  in \cref{ln:trivOPT}.
        \item  Otherwise, $(\hat P, \hat D)$ admits an optimal primal-dual pair.
        $(\hat P, \hat D)$, 
        and \Cref{alg:Anonneg} returns an optimal primal-dual pair for $(P,D)$  in \cref{ln:OPT}.
    \end{enumerate}
\end{enumerate}
 \end{theorem}
 \begin{proof}
We prove the statements one by one.
     \begin{enumerate}
   \item
   If $b\ge \allzeros$, then $x = \allzeros$ is a feasible solution of $(P)$.
   Conversely, if $x$ is a feasible solution of $(P)$, then $b\ge Ax\ge \allzeros$, where the second inequality is true because $A\ge \allzeros$ and $x\ge \allzeros$. 
    
    Moreover, assume that there exists $i \in [m]$ such that $b_i < 0$ and consider $z^*$ as defined in \Cref{alg:Anonneg}.
    Then $A\supt z^* \geq \allzeros$, $z^*\geq \allzeros$, and $b\supt z^*<0$.
    Hence, $z^*$ is an infeasibility certificate for~$(P)$.
    \item Assume that there exists $j \in [n]\setminus J_{-}$ such that $a_{ij} = 0$ for all $i\in [m]$. 
    Consider $w^*$ as returned in \cref{ln:trivUB}.
    Clearly $w^* \ge \allzeros$. In addition, since $j\notin J_{-}$, it holds that $c_j > 0$ and therefore $c\supt w^* > 0$.
    Finally, for every $i\in [m]$ it holds that every $(Aw^*)_i = a_{ij} = 0$. 
    We used that only the $j$th entry of $w^*$ is nonzero and our assumption on the entries of $A$.

    Conversely, assume that $(P)$ admits an unboundedness certificate $w^*$, i.e., $w^*\ge \allzeros$, $Aw^* \le \allzeros$, and $c\supt w^* > 0$.
    Since $c\supt w^* > 0$ and $w^*\ge\allzeros$, there exists $j^*\in [n]$ with $c_{j^*} > 0$ and $w^*_{j^*}>0$.
    By definition $j^*\notin J_{-}$.
    Define $A_{j^*}$ as the $j^*$th column of $A$.
    Then, $\allzeros \ge Aw^* \ge w^*_{j^*}A_{j^*}$. 
    Since $w^*_{j^*}>0$, it follows that $A_{j^*} \le \allzeros$ and therefore $A_{j^*} = \allzeros$ by the nonnegativity of~$A$.
    \item[3-(a).] Consider the pair 
        $(x^*,y^*)$ returned in \cref{ln:trivOPT}.
        Clearly $x^* = \allzeros$ is primal feasible.
        Moreover, $y^* \ge \allzeros$.
        Now let $j\in [n]$.
        If $j\in J_{-}$, the $j$th constraint is satisfied.
        Otherwise, it holds that $j\in [n]\setminus J_{-}$ and we have already excluded the case where $a_{ij} = 0$ for all $i\in [m]$.
        Hence, there exists $i\in [m]$ with $a_{ij} > 0$.
        By definition of $\numbpara$, it holds that $\numbpara \ge \frac{c_j}{a_{ij}}$.
        Therefore,
        $(A\supt y^*)_j \ge a_{ij} y^*_i = a_{ij} \numbpara \ge  a_{ij}\frac{c_j}{a_{ij}} = c_j$.
        Hence, $y^*$ is dual feasible.
        Finally, it holds that $b\supt y^* = \allzeros = c\supt x^*$ where the first inequality holds by definition of $y^*$ (whose components are only nonzero for indices in $I_0$).
        Together, $(x^*,y^*)$ is an optimal primal-dual pair.

    \item[3-(b).]
    We will first prove the existence of feasible solutions to $\hat P$ and $\hat D$.
    Clearly $\hat x = \allzeros$ is feasible for $(\hat P)$ because $\hat b > \allzeros$.
    
    Moreover, since $\hat b\ge \allzeros$, we can apply part \textit{2-(a)} to conclude that $(\hat P)$ is bounded if $\hat A$ does not contain $0$-columns (which is a stronger condition than only requiring certain nonzero columns).
    Let $j\in [n]\setminus (J_0\cup J_{-})$.
    We claim that there exists $i\in [m]\setminus I_0$ with $a_{ij} > 0$.
    Assume that this was not the case, i.e., $a_{ij} = 0$ for all $i\in [m]\setminus I_0$.
    Note that since $j\notin J_0$, we also have that $a_{ij} = 0$ for all $i\in I_0$.
    However, this would imply that we have identified a $j\in [n]\setminus J_{-}$ such that $a_{ij} = 0$ for all $i\in [m]$, contradicting that \Cref{alg:Anonneg} has reached \cref{ln:modLP}.
    Hence, $(\hat P)$ is feasible and bounded, and, therefore, $(\hat D)$ is feasible as well.
    By strong LP duality, we have the existence of an optimal primal-dual pair.

    Now, assume that $(\hat x, \hat y)$ is an optimal primal-dual pair for 
        $(\hat P, \hat D)$ and consider
        $(x^*,y^*)$ as defined in \Cref{alg:Anonneg}.
Then, $x^*\ge \allzeros$ and $y^* \ge \allzeros$.
In addition, whenever $b_i = 0$, then all column indices $j\in [n]$ with $a_{ij} > 0$ are in $J_0$ therefore $(Ax^*)_i = 0$ and the $i$th constraint is satisfied.
All other constraints of $(P)$ are satisfied because their corresponding constraint in the restricted game is satisfied.
Hence, $x^*$ is feasible for $(P)$.

Moreover, for all indices $j\in J_{-}$, it holds that $(A\supt y^*)_j \ge 0 \ge c_j$ where we use the nonnegativity of $A$ and $y^*$ for the first inequality.
Additionally, if $j\in J_0$, then there exists $i\in I_0$ with $a_{ij} > 0$.
By definition of $\numbpara$, it holds that $\numbpara \ge \frac{c_j}{a_{ij}}$.
Hence,
$(A\supt y^*)_j \ge a_{ij} y^*_i = a_{ij} \numbpara \ge  a_{ij}\frac{c_j}{a_{ij}} = c_j$.
It follows that $y^*$ is dual feasible.

Finally, it holds that 
\begin{equation}\label{eq:optimal}
    c\supt x^* = \hat c\supt \hat x = \hat b\supt \hat y = b\supt y^*\text.
\end{equation}
The first equality holds because we have extended $\hat x$ only by inserting zeros.
The second equality holds by primal-dual optimality of $(\hat x, \hat y)$.
The third inequality holds because $b_i = 0$ whenever $i\in I_0$.
Hence, \Cref{eq:optimal} implies that $(x^*,y^*)$ is an optimal primal-dual pair.
    \qedhere
    \end{enumerate}

 \end{proof}

We remark that solving $(\hat P, \hat D)$ could a priori lead to the existence of either an optimal primal-dual pair or an unboundedness certificate, while $(\hat P)$ is never infeasible (see the discussion of the rescaled game in \Cref{sec:posvectors}).
However, unboundedness of $(\hat P)$ never occurs, as we prove in the third part of \Cref{thm:A_nonnegative}.
Since \Cref{thm:A_nonnegative} covers all cases where \Cref{alg:Anonneg} produces an output, it proves the correctness of \Cref{alg:Anonneg}.

\begin{corollary}
    \Cref{alg:Anonneg} solves linear programs with $A \ge \allzeros$.
\end{corollary}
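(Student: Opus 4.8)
The plan is to derive the corollary directly from \Cref{thm:A_nonnegative}, which was set up precisely to certify \Cref{alg:Anonneg}, by checking two things: that the algorithm always halts with an output, and that each possible output is correct. Recall that \emph{solving} the linear program means returning one of the three outcomes of \Cref{prop:LP_result}; strong polynomiality of the running time has already been noted, so only these two points remain. For the first, I would observe that the guards of the conditional statements in \Cref{alg:Anonneg} are exhaustive: either some $b_i<0$, and the algorithm returns in \cref{ln:INF}; or $b\ge\allzeros$ and some column $j\in[n]\setminus J$ of $A$ is identically zero, and it returns in \cref{ln:trivUB}; or $b\ge\allzeros$, no such column exists, and $I_0=[m]$ or $J_0\cup J=[n]$, and it returns in \cref{ln:trivOPT}; or none of these holds and \cref{ln:modLP} is reached.

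In that last case I would verify that the restricted pair $(\hat P,\hat D)$ is a legitimate input for our correspondence. Since $I_0\ne[m]$ and $J_0\cup J\ne[n]$, the index sets surviving the deletion are nonempty, so $(\hat P,\hat D)$ is a genuine LP pair; and $\hat b,\hat c>\allzeros$ strictly, because after the first guard $b\ge\allzeros$ and deleting the zero entries $I_0$ leaves $\hat b>\allzeros$, while deleting $J\supseteq\{j:c_j\le 0\}$ leaves $\hat c>\allzeros$. Hence \Cref{thm:main} applies to $(\hat P,\hat D)$: forming the zero-sum game $(G)$ with payoff matrix $\hat B\hat A\hat C$, solving it for a pair of maximin strategies, and translating back through \Cref{prop:LPscaling} yields either an optimal primal-dual pair for $(\hat P,\hat D)$ or an unboundedness certificate for $(\hat P)$ --- and never an infeasibility certificate, since $\hat b>\allzeros$ makes $(\hat P)$ feasible (cf.\ \Cref{rem:infeasible}). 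These are exactly the two alternatives tested by the final two \texttt{if}-statements, so the algorithm returns in \cref{ln:OPT} or in \cref{ln:UB}. This establishes that the algorithm always produces an output.

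It then remains only to invoke correctness branch by branch: the output in \cref{ln:INF} is a valid infeasibility certificate by \Cref{thm:A_nonnegative}(1); the output in \cref{ln:trivUB} is a valid unboundedness certificate by part 2(a); the output in \cref{ln:trivOPT} is an optimal primal-dual pair by part 2(b); the output in \cref{ln:OPT} is an optimal primal-dual pair by part 3(a); and the output in \cref{ln:UB} is an unboundedness certificate by part 3(b). Since these five cases are the only ways \Cref{alg:Anonneg} can terminate, and in each \Cref{thm:A_nonnegative} guarantees that the returned object is a certificate in the sense of \Cref{prop:LP_result}, the algorithm solves linear programs with $A\ge\allzeros$. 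I do not expect a real obstacle, since the mathematical content lives entirely in the already-proved \Cref{thm:A_nonnegative}; the only step needing genuine care is the completeness check above, in particular confirming that whenever \cref{ln:modLP} is reached the pair $(\hat P,\hat D)$ satisfies the hypotheses of \Cref{thm:main}, so that ``solve $(\hat P,\hat D)$'' is guaranteed to return one of the two expected forms.
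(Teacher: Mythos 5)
Your proposal is correct and follows essentially the same route as the paper, which disposes of the corollary in a single sentence by noting that \Cref{thm:A_nonnegative} covers every branch in which \Cref{alg:Anonneg} produces an output. Your additional care in checking that the guards are exhaustive and that $(\hat P,\hat D)$ meets the hypotheses of \Cref{thm:main} when \cref{ln:modLP} is reached only makes explicit what the paper leaves implicit.
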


\section{Discussion}

The question of generalizing von Neumann's result \citep{vNeu53a} to an arbitrary primal-dual pair of LPs is clearly an important one. 
Our work explores an extension based on rescaling, the crucial idea that helps define the payoffs in the hide-and-seek game. 
While at first glance this seems only possible if $b > \allzeros$ and $c > \allzeros$, we managed to reduce the case $A \geq \allzeros$ to this case as well. 
Going further may well require a new approach. 

We conclude our paper with discussions tying up a few loose ends.

\subsection{Production Planning Games}\label{sec:productionapplication}

We want to demonstrate the applicability of our reductions by discussing an interpretation of the reduced game for any LP with positive $b$ and~$c$ and demonstrate how \citeauthor{vNeu53a}'s hide-and-seek game is a special case.

LPs with positive $b$ and $c$ 
 can be viewed as planning optimal production:
A central authority decides how much of each of $n$ goods to produce, where $x_j$ is the amount of good~$j$ produced.
The profit from producing one unit of good~$j$ is $c_j$.
The production of each good consumes certain amounts of $m$ resources representing, e.g., raw materials and electricity, or even labor hours and machine capacity.
Producing a unit of the $j$th good requires $a_{ij}$ units of the $i$th resource.
Finally, we have total availability of $b_i$ units for the $i$th resource.
The total usage of resources is not allowed to exceed their availability, which is exactly captured by the constraints $Ax\le b$.
Furthermore, it is reasonable to assume positive profits from producing a good as well as a positive availability of each resource.
Hence, the production planning scenario is captured by the linear program $(P)$ for positive $b$ and $c$.
In many applications, we also have nonnegative entries in $A$.
However, negative entries in $A$ allow for modeling further scenarios, such as producing a resource, e.g., electricity, as a byproduct in a production process.

The reduced game now has the following interpretation: the column player mimics the production planner while the row player mimics a resource provider.
A pure strategy for the column player is to pick a good while the row player picks a resource.
While the production planner wants to gain high production profits, the resource player wants to sell them a scarce resource.
Selecting the $j$th good and $i$th resource leads to a payment by the column player to the row player of $\frac {a_{ij}}{b_ic_j}$.
This quantity is the resource share needed to produce a unit of the $j$th good, that is, $\frac {a_{ij}}{b_i}$, divided by the profit $c_j$ for producing one unit.
Together, this is the required resource share per unit profit.
If the selected good leads to a small profit or needs a high amount of the resource, i.e., the resource is scarce for this good, then the payment is high.
Conversely, needing a small share of the resource or making a lot of profit leads to a small payment, or even a negative payment if the resource is produced as a byproduct indicated by a negative $a_{ij}$.
Hence, it makes intuitive sense that the payoff depends on the reciprocals of both the resource availability and the price of the good.

This perspective also offers a novel view on the hide-and-seek game. 
In this context, the production planner generates edges. 
For instance, the planner could be an event manager organizing tennis matches. 
A match between players $i$ and $j$ yields a surplus $\surplus_{ij}$. 
The resource provider supplies the players, who correspond to the vertices in the assignment problem. 
By the rules of the hide-and-seek game, if a player participates in a desired match, then a compensation of $1/\surplus_{ij}$ must be paid. 
Assuming that the surplus from a match accrues linearly over its duration, with a total surplus of $\surplus_{ij}$, this compensation corresponds to the fraction of the match that yields one unit of profit.

\subsection{Computational Considerations}

The key idea of the papers repairing \citeauthor{Dant51a}'s original reduction is to modify his reduced game: \citet{Adle13a} and \citet{vSte23a} add an additional row while \citet{BrRe23a} transform the game with elementary column operations.
These reductions have in common the fact that the constraint matrix or vectors in the reduced games need large numbers, dependent on the input numbers and exponential in $m$ or~$n$.\footnote{\label{fn:largenumb}
As an example, in the reduction by \citet{Adle13a}, the additional entry of the constraint vector is estimated as a number that is at least $m(\sqrt{m}\theta)^m+1$ where $\theta$ is the absolute value of the largest number contained in $A$ and $b$.}
Nonetheless, since the large numbers can be computed with a polynomial number of arithmetic operations and have a bit size polynomial in the input bit size, the reductions are strongly polynomial.\footnote{For a formal definition of strongly polynomial time, we refer to \citet{GLS93a}.}
In addition, \citet{Adle13a} also provides a Turing reduction based on sequential variable elimination that avoids large numbers.
Our reduction is a strongly polynomial reduction without large numbers.

Within the different realm of computational models based on Turing machines, all of the aforementioned reductions are log-space reductions, i.e., they only use logarithmic space except for outputting the polynomially large reduced instance \citep{ArBa09a}. 
In fact, the occurring large numbers can be computed bit-wise in logarithmic space. 
This has the interesting consequence that computing maximin strategies in zero-sum games is P-complete because solving linear programs is P-complete \citep{DLR79a}.
Apparently, this observation was missed in previous work.

\subsection{Relationship with Dantzig's Reduction}

The reduction in the seminal paper by \citet{Dant51a} is based on the following skew-symmetric game matrix, given the input data of the primal LP $(P)$.\footnote{\citet{Dant51a} attributes the idea for this matrix to Tucker.}

\begin{equation}\label{eq:Dantzig}
\begin{pmatrix}
\medskip
\allzeros & A & -b \\
-A\supt & \allzeros & c \\
b\supt & -c\supt & 0\\
\end{pmatrix}\text.
\end{equation}

Denoting a maximin strategy as $(p^*, q^*, t^*)$, the key result of the paper by \citet{Dant51a} states:
\begin{enumerate}
\item
If $ t^*>0$, then $(x^*,y^*)= \left(\frac{1}{t^*}q^*,\frac{1}{t^*}p^*\right) \text{ satisfies (1) in \Cref{prop:LP_result}}.$
\item
    If $b\supt p^* -c\supt q^* <0$, then either $z^*=p^*$ satisfies (2) or $w^*=q^*$ satisfies (3) in \Cref{prop:LP_result}.
\end{enumerate}

Note that since $(p^*,q^*,t^*)$ is a maximin strategy for a game of value~$0$, it holds that $t^*(b\supt p^* -c\supt q^*) = 0$. 
The goal of solving $(P)$ is achieved, except for the case $t^*=b\supt p^* -c\supt q^* =0$. 
Notably, this ``hole'' in the reduction was clearly recognized by \citet{Dant51a}.

In this section, we explore whether the restrictions that we impose on linear programs suffice to make his reduction work.
First, we show that under the assumption  $b>\allzeros$ and $c>\allzeros$, his reduction works.
Let $(p^*,q^*,t^*)$ be a minimax strategy of the row player.
Then, it holds that

{\small
  \begin{equation*}
  \begin{array}{lll}
     (i)\   A q^* -b t^*  \leq \allzeros, 
       &(ii)\  -A\supt p^*  +c t^* \leq \allzeros, 
       &(iii)\   b\supt p^*  - c\supt q^* \leq \allzeros, \\ 
     (iv)\    \allones\supt p^* +  \allones\supt q^*  + t^* =1, 
     &(v)\   p^* \geq \allzeros, \; q^*\geq \allzeros,\; t^* \geq 0, 
     &(vi)\ t^*(  b\supt p^*  - c\supt q^* )= 0.
  \end{array}
  \end{equation*}}

 Observe that, as the game is symmetric, $(p^*,q^*,t^*)$ is also a maximin strategy for the column player, and the value of the game is $0$.

\begin{proposition}
\label{prop:dantzig}
   Assume that $b>\allzeros$ and $c>\allzeros$ and let $(p^*,q^*,t^*)$ be a minimax strategy of the row player in Dantzig's game given by \Cref{eq:Dantzig}. Then the following statements are true:
   \begin{enumerate}
 \item If $t^*>0$, then $\left(\frac{1}{t^*}q^*, \frac{1}{t^*}p^*\right)$ is an optimal primal-dual pair for $(P,D)$.
 \item If $t^*=0$, then $q^*$ is an unboundedness certificate for  
 $(P)$.

   \end{enumerate}

\end{proposition}

\begin{proof}
We prove the statements one by one.
    \begin{enumerate}
    \item We assume $t^* >0$.
  Conditions ($i$), ($ii$), and ($v$) imply that $\frac{1}{t^*}q^*$ and $\frac{1}{t^*}p^*$ are feasible for $(P)$ and $(D)$, respectively. 
  In addition, by ($vi$), $\frac{1}{t^*}\left(b\supt p^*  - c\supt q^*\right) = 0$, which completes the necessary and sufficient conditions for the optimality of $\frac{1}{t^*}q^*$ and~$\frac{1}{t^*}p^*$.
  \item Now assume $t^* = 0$.
  Assume for contradiction that $q^* = \allzeros$. 
  Then, by ($iv$) and ($v$), it holds that $0 \neq p^* \ge \allzeros$.
  Hence, since $b > \allzeros$, it follows that $b\supt p^* > 0$, contradicting ($iii$). 
  Thus, $q^* \neq 0$ which implies (as $c>\allzeros$) that
  $c\supt q^* >0$.
  Moreover, by ($i$) and ($v$), it holds that $Aq^* \leq \allzeros$ and $q^* \ge \allzeros$.
  Together, $q^*$ is an unboundedness certificate for $(P)$.\qedhere
  \end{enumerate}
\end{proof}

Notably, the second part of \Cref{prop:dantzig} has the precondition $t^* = 0$ whereas the theorem by \citet{Dant51a} had the precondition $b\supt p^* - c\supt q^* < 0$.
Recall that the issue with Dantzig's proof was that it could happen that both $t^* = 0$ and $b\supt p^* - c\supt q^* = 0$.
This raises the question whether \Cref{prop:dantzig} works because this case never occurs or whether it works because this case can be dealt with successfully (unlike in Dantzig's reduction).
Remarkably, the latter is the case, as our next example shows.
Regardless of conditions on $p^*$ and $q^*$, we obtain an unboundedness certificate whenever $t^* = 0$.

\begin{example}
    We provide an example illustrating \Cref{prop:dantzig} for the case when $t^* = 0$ and $b\supt p^* - c\supt q^* = 0$.
    Consider the LP given by
    \begin{equation*}
        A = \begin{pmatrix}
            -1 & \frac 12 \\
            1 & -\frac 12
        \end{pmatrix}\text{, }
        b = \begin{pmatrix}1\\1\end{pmatrix}\text{, and }
        c = \begin{pmatrix}1\\1\end{pmatrix}\text.
    \end{equation*}
    It is easy to verify that $(p^*,q^*,t^*) = \left(\frac 14, \frac 14, \frac 16,\frac 13, 0\right)$ is a maximin strategy of Dantzig's reduced zero-sum game. 
    Clearly, it holds that $t^* = 0$ and $b\supt p^* - c\supt q^* = 0$.
    However, this does not pose a problem as we can still extract the unboundedness certificate $q^* = \left(\frac 16,\frac 13\right)$.\hfill$\lhd$
\end{example}

In contrast to the case $b> \allzeros$ and $c> \allzeros$ covered by \Cref{prop:dantzig}, it is possible for Dantzig's reduction to fail whenever $A \geq \allzeros$.

\begin{example}
    Consider any constraint matrix $A \geq \allzeros$ and consider the linear program pair given by $A$ as well as $b = \allzeros$ and $c = \allzeros$. 
    \citeauthor{Dant51a}'s reduced game as in \Cref{eq:Dantzig} for this case is
    \begin{equation*}
        \begin{pmatrix}
        \allzeros & A & \allzeros \\
        -A\supt & \allzeros & \allzeros \\
        \allzeros\supt & \allzeros\supt & 0\\
        \end{pmatrix}\text.
    \end{equation*}
    Clearly, $(p^*, q^*, t^*) = (\unitvect,\allzeros,0)$ is a maximin strategy for both players, where $\unitvect = (1,0,\dots,0)$ is the first unit vector.
    Since, $b = \allzeros$ and $c = \allzeros$, it holds that $t^* = 0$ and $b\supt p^* - c\supt q^* = 0$, so we cannot directly apply \citeauthor{Dant51a}'s correspondence.
    Furthermore, extracting a primal-dual optimal pair or certificates of infeasibility or unboundedness from the maximin strategy like in \Cref{prop:dantzig} fails.
    Scaling $p^*$ is never dual feasible (apart from scaling with $0$), and $p^*$ is also not a certificate of infeasibility or unboundedness.
    \hfill$\lhd$
\end{example}

Arguably, the case of $b = \allzeros$ and $c = \allzeros$ in the last example may seem unsatisfying.
However, we want to stress that \citeauthor{Dant51a}'s correspondence can only fail for somewhat trivial cases, namely cases covered by our preprocessing in \Cref{alg:Anonneg}.
Indeed, instead of using our reduction, we can also use \citeauthor{Dant51a}'s reduction in \cref{ln:modLP} of \Cref{alg:Anonneg} for tackling LPs with $A\ge \allzeros$.
This approach is sound due to \Cref{prop:dantzig}.

\section*{Acknowledgements}
We would like to thank the editor and anonymous referees from GEB for their excellent comments.
Most of this research was done while Martin Bullinger was affiliated with the University of Oxford.
Martin Bullinger was supported by the AI Programme of The Alan Turing Institute. Vijay V. Vazirani was supported by NSF grant CCF-2230414.


\begin{thebibliography}{20}
\expandafter\ifx\csname natexlab\endcsname\relax\def\natexlab#1{#1}\fi
\providecommand{\url}[1]{\texttt{#1}}
\providecommand{\href}[2]{#2}
\providecommand{\path}[1]{#1}
\providecommand{\DOIprefix}{doi:}
\providecommand{\ArXivprefix}{arXiv:}
\providecommand{\URLprefix}{URL: }
\providecommand{\Pubmedprefix}{pmid:}
\providecommand{\doi}[1]{\href{http://dx.doi.org/#1}{\path{#1}}}
\providecommand{\Pubmed}[1]{\href{pmid:#1}{\path{#1}}}
\providecommand{\bibinfo}[2]{#2}
\ifx\xfnm\relax \def\xfnm[#1]{\unskip,\space#1}\fi
\bibitem[{Adler(2013)}]{Adle13a}
\bibinfo{author}{Adler, I.}, \bibinfo{year}{2013}.
\newblock \bibinfo{title}{The equivalence of linear programs and zero-sum
  games}.
\newblock \bibinfo{journal}{International Journal of Game Theory}
  \bibinfo{volume}{42}, \bibinfo{pages}{165--177}.
\bibitem[{Arora and Barak(2009)}]{ArBa09a}
\bibinfo{author}{Arora, S.}, \bibinfo{author}{Barak, B.}, \bibinfo{year}{2009}.
\newblock \bibinfo{title}{Computational Complexity: {A} Modern Approach}.
\newblock \bibinfo{publisher}{Cambridge University Press}.
\bibitem[{Brooks and Reny(2023)}]{BrRe23a}
\bibinfo{author}{Brooks, B.}, \bibinfo{author}{Reny, P.}, \bibinfo{year}{2023}.
\newblock \bibinfo{title}{A canonical game—75 years in the making—showing
  the equivalence of matrix games and linear programming}.
\newblock \bibinfo{journal}{Economic Theory Bulletin} \bibinfo{volume}{26},
  \bibinfo{pages}{165--177}.
\bibitem[{Casti(1996)}]{Casti-Minimax}
\bibinfo{author}{Casti, J.L.}, \bibinfo{year}{1996}.
\newblock \bibinfo{title}{Five golden rules: Great theories of 20th-century
  mathematics and why they matter}.
\newblock \bibinfo{journal}{New York: Wiley} .
\bibitem[{Chv{\'a}tal(1983)}]{Chva83a}
\bibinfo{author}{Chv{\'a}tal, V.}, \bibinfo{year}{1983}.
\newblock \bibinfo{title}{Linear Programming}.
\newblock Series of books in the mathematical sciences, \bibinfo{publisher}{W.
  H. Freeman}.
\bibitem[{Dantzig(1951)}]{Dant51a}
\bibinfo{author}{Dantzig, G.B.}, \bibinfo{year}{1951}.
\newblock \bibinfo{title}{A proof of the equivalence of the programming problem
  and the game problem}, in: \bibinfo{editor}{Koopmans, T.C.} (Ed.),
  \bibinfo{booktitle}{Activity Analysis of Production and Allocation}.
  \bibinfo{publisher}{John Wiley \& Sons Inc.}, pp. \bibinfo{pages}{330--335}.
\bibitem[{Dantzig(1982)}]{Dant82a}
\bibinfo{author}{Dantzig, G.B.}, \bibinfo{year}{1982}.
\newblock \bibinfo{title}{Reminiscences about the origins of linear
  programming}.
\newblock \bibinfo{journal}{Operations Research Letters} \bibinfo{volume}{1},
  \bibinfo{pages}{43--48}.
\bibitem[{Dobkin et~al.(1979)Dobkin, Lipton and Reiss}]{DLR79a}
\bibinfo{author}{Dobkin, D.}, \bibinfo{author}{Lipton, R.J.},
  \bibinfo{author}{Reiss, S.}, \bibinfo{year}{1979}.
\newblock \bibinfo{title}{Linear programming is log-space hard for {P}}.
\newblock \bibinfo{journal}{Information Processing Letters}
  \bibinfo{volume}{8}, \bibinfo{pages}{96--97}.
\bibitem[{Gale et~al.(1950)Gale, Kuhn and Tucker}]{GKT50a}
\bibinfo{author}{Gale, D.}, \bibinfo{author}{Kuhn, H.W.},
  \bibinfo{author}{Tucker, A.W.}, \bibinfo{year}{1950}.
\newblock \bibinfo{title}{On symmetric games}, in: \bibinfo{editor}{Kuhn,
  H.W.}, \bibinfo{editor}{Tucker, A.W.} (Eds.),
  \bibinfo{booktitle}{Contributions to the Theory of Games}.
  \bibinfo{publisher}{Princeton University Press}. volume~\bibinfo{volume}{1},
  pp. \bibinfo{pages}{81--87}.
\bibitem[{Galichon and Jacquet(2024)}]{GaJa24a}
\bibinfo{author}{Galichon, A.}, \bibinfo{author}{Jacquet, A.},
  \bibinfo{year}{2024}.
\newblock \bibinfo{title}{The matching problem with linear transfers is
  equivalent to a hide-and-seek game}.
\newblock \bibinfo{type}{Technical Report}. https://arxiv.org/abs/2402.12200.
\bibitem[{Gr{\"o}tschel et~al.(1993)Gr{\"o}tschel, Lov{\'a}sz and
  Schrijver}]{GLS93a}
\bibinfo{author}{Gr{\"o}tschel, M.}, \bibinfo{author}{Lov{\'a}sz, L.},
  \bibinfo{author}{Schrijver, A.}, \bibinfo{year}{1993}.
\newblock \bibinfo{title}{Geometric Algorithms and Combinatorial Optimization}.
  volume~\bibinfo{volume}{2} of \textit{\bibinfo{series}{Algorithms and
  Combinatorics}}.
\newblock \bibinfo{publisher}{Springer}.
\bibitem[{Marchi and Tarazaga(1981)}]{MaTa81a}
\bibinfo{author}{Marchi, E.}, \bibinfo{author}{Tarazaga, P.},
  \bibinfo{year}{1981}.
\newblock \bibinfo{title}{A generalization of von {N}eumann's assignment
  problem and {K}. {F}an optimization result}.
\newblock \bibinfo{journal}{Annali di Matematica Pura ed Applicata}
  \bibinfo{volume}{129}, \bibinfo{pages}{1--12}.
\bibitem[{Nisan et~al.(2007)Nisan, Roughgarden, Tardos and Vazirani}]{NRTV07a}
\bibinfo{author}{Nisan, N.}, \bibinfo{author}{Roughgarden, T.},
  \bibinfo{author}{Tardos, {\'E}.}, \bibinfo{author}{Vazirani, V.},
  \bibinfo{year}{2007}.
\newblock \bibinfo{title}{Algorithmic Game Theory}.
\newblock \bibinfo{publisher}{Cambridge University Press}.
\bibitem[{Schrijver(1998)}]{Schr98a}
\bibinfo{author}{Schrijver, A.}, \bibinfo{year}{1998}.
\newblock \bibinfo{title}{Theory of Linear and Integer Programming}.
\newblock \bibinfo{publisher}{Wiley}.
\bibitem[{Schrijver(2003)}]{Schr03a}
\bibinfo{author}{Schrijver, A.}, \bibinfo{year}{2003}.
\newblock \bibinfo{title}{Combinatorial Optimization - Polyhedra and
  Efficiency}.
\newblock \bibinfo{publisher}{Springer}.
\bibitem[{von Stengel(2023)}]{vSte23a}
\bibinfo{author}{von Stengel, B.}, \bibinfo{year}{2023}.
\newblock \bibinfo{title}{Zero-sum games and linear programming duality}.
\newblock \bibinfo{journal}{Mathematics of Operations Research} .
\bibitem[{{von Neumann}(1928)}]{vNeu28a}
\bibinfo{author}{{von Neumann}, J.}, \bibinfo{year}{1928}.
\newblock \bibinfo{title}{Zur {T}heorie der {G}esellschaftspiele}.
\newblock \bibinfo{journal}{Mathematische Annalen} \bibinfo{volume}{100},
  \bibinfo{pages}{295--320}.
\bibitem[{{von Neumann}(1953)}]{vNeu53a}
\bibinfo{author}{{von Neumann}, J.}, \bibinfo{year}{1953}.
\newblock \bibinfo{title}{A certain zero-sum two-person game equivalent to the
  optimal assignment problem}, in: \bibinfo{booktitle}{Contributions to the
  Theory of Games II}. \bibinfo{publisher}{Princeton University Press}.
  number~\bibinfo{number}{28} in \bibinfo{series}{Annals of Mathematics
  Studies}, pp. \bibinfo{pages}{5--12}.
\bibitem[{{von Neumann} and Morgenstern(1947)}]{von1947theory}
\bibinfo{author}{{von Neumann}, J.}, \bibinfo{author}{Morgenstern, O.},
  \bibinfo{year}{1947}.
\newblock \bibinfo{title}{Theory of Games and Economic Behavior}.
\newblock \bibinfo{edition}{2nd} ed., \bibinfo{publisher}{Princeton University
  Press}.
\bibitem[{Yao(1977)}]{Yao1977probabilistic}
\bibinfo{author}{Yao, A.C.C.}, \bibinfo{year}{1977}.
\newblock \bibinfo{title}{Probabilistic computations: Toward a unified measure
  of complexity}, in: \bibinfo{booktitle}{18th Annual Symposium on Foundations
  of Computer Science (sfcs 1977)}, \bibinfo{organization}{IEEE Computer
  Society}. pp. \bibinfo{pages}{222--227}.

\end{thebibliography}

\end{document}